\newtheorem{theorem}{Theorem}[section]
\newtheorem{lemma}[theorem]{Lemma}
\newtheorem{corollary}[theorem]{Corollary}
\newcommand{\den}{\varepsilon}
\newcommand{\Rdm}{\mathbb{R}^2_{+}}
\newcommand{\Ld}{\Delta}
 \newcommand{\Ldt}{{^{(3)}\Delta}}
\newcommand{\Ldta}{{^{(7)}}\Delta}
\newcommand{\po}{d}
\newcommand{\mt}{\bar{m}_2}
\title{Bounds for axially symmetric linear perturbations for the  extreme Kerr black hole}
\author{Sergio Dain and Ivan Gentile de Austria\\
 Facultad de Matem\'atica, Astronom\'{i}a y F\'{i}sica, FaMAF,\\
  Universidad Nacional de C\'ordoba,\\
  Instituto de F\'{\i}sica Enrique Gaviola, IFEG, CONICET,\\
  Ciudad Universitaria, (5000) C\'ordoba, Argentina.}
\begin{document}

\maketitle

\begin{abstract}
  We obtain remarkably simple integral bounds for axially symmetric linear
  perturbations for the extreme Kerr black hole in terms of conserved energies.
  From these estimates we deduce pointwise bounds for the perturbations outside
  the horizon.
\end{abstract}

\section{Introduction}
In this article we continue the work initiated in \cite{Dain:2014iba} destined
to study the linear stability of extreme Kerr black hole under axially
symmetric gravitational perturbations using conserved energies. For a general
introduction to the subject and a list of relevant references we refer to
\cite{Dain:2014iba}. The main result of that article is that there exists a
positive definite and conserved energy for the axially symmetric gravitational
perturbations. In the present article, using this energy, we prove the
existence of integral bounds for the first and second derivatives of the
perturbation. In particular, these bounds imply pointwise estimates for the
perturbation outside the black hole horizon. This result is presented in
theorem \ref{theorem1} and corollary \ref{corolario}. In the following we
introduce the basic definitions and notation needed to formulate the theorem
and then we discuss the meaning and scope of these estimates.

Axially symmetric perturbations are characterized by two functions $\sigma_1$
and $\omega_1$ which represent the linear perturbation of the norm and the
twist of the axial Killing vector. The coordinates system $(t,\rho,z)$ is fixed
by the maximal-isothermal gauge condition.  Partial derivatives with respect to
the space coordinates $(\rho,z)$ is denoted by $\partial_\rho$ and $\partial_z$
respectively and partial derivative with respect to $t$ is denotes with a
dot. We use the following notation to abbreviate the products of gradients in the spatial
coordinates $(\rho,z)$ for functions $f$ and $g$
\begin{equation}
  \label{eq:18}
 \partial f \partial g = \partial_\rho f \partial_\rho g+  \partial_z
 f \partial_z g, \quad|\partial f|^2=(\partial_\rho f)^2+(\partial_z f)^2.
\end{equation}
 The 2-dimensional Laplacian $\Ld$ is defined by 
\begin{equation}
  \label{eq:delta}
\Ld f = \partial^2_\rho f +  \partial^2_z  f,
\end{equation}
and  the operators  $\Ldt$  and $\Ldta$ are  defined by
\begin{equation}
  \label{eq:delta3}
  \Ldt f=\Ld f+\frac{\partial_\rho f}{\rho}, \quad 
 \Ldta f=\Ld f+5\dfrac{\partial_{\rho} f}{\rho}.
\end{equation}
The operators $\Ldt$ and $\Ldta$ correspond to the flat Laplace operator in
3-dimensions and 7-dimensions respectively written in cylindrical coordinates
and acting on axially symmetric functions.

The domain for the space coordinates $(\rho,z)$ is the half plane $\Rdm$
defined by $0\leq \rho<\infty$, $-\infty < z < \infty$.  The axis of symmetry
is give by $\rho=0$.  In these coordinates, the horizon is located at the
origin $r=0$, where $r=\sqrt{\rho^2 + z^2}$. We follow the same notation and
conventions used in \cite{Dain:2014iba} and we refer to that article for
further details.

The linear equations for axially symmetric gravitational perturbations for the
extreme Kerr black hole in the maximal-isothermal gauge were obtained in
\cite{Dain:2014iba}. In appendix \ref{linearized-equations} we briefly review
the set of equations needed in the proof of theorem \ref{theorem1}. The
background quantities are denoted with a subindex $0$, and the first order
perturbation with a subindex $1$.  The square norm of the axial Killing vector
of the background extreme Kerr metric is denoted by $\eta_0$ and the background
function $\sigma_0$ is defined by
\begin{equation}
  \label{eq:app35}
e^{\sigma_0}=\frac{\eta_0}{\rho^2}.
\end{equation}
The other relevant background quantities are the twist $\omega_0$ and the
function $q_0$. In appendix \ref{s:falloff} we review the behaviour of these
explicit functions.

It is useful to define the following rescaling of $\omega_1$
\begin{equation}
 \label{eq:omegabarra}
 \bar{\omega}_{1}=\dfrac{\omega_{1}}{\eta_{0}^{2}}.
\end{equation}
The extreme Kerr solution depends on only one parameter $m_0$ which represents
the total mass of the black hole. The first order linearization of the total
mass of the spacetime $m_1$ vanished. The second order expansion of the total
mass $m_2$ provides a positive definite and conserved quantity for the
perturbation (see \cite{Dain:2014iba}) that is given explicitly in equation
(\ref{eq:3mass}). Taking time derivatives of the linear equations we get an
infinity number of conserved quantities that have the same form as $m_2$ but in
terms of the time derivatives of the corresponding quantities. For the result
presented bellow, we will make use of $\mt$ which is obtained taking one time
derivative of the equations. The explicit expression for $\mt$ is given in
equation (\ref{eq:3mass2}). The conserved quantities $m_2$ and $\mt$ depend
only on the initial conditions for the perturbation.

We will assume in the following that the perturbations satisfy the fall off and
boundary conditions discussed in detail in \cite{Dain:2014iba}. Physically,
these conditions imply that the system is isolated (and hence it has finite
total energy $m_2$) and that that the perturbations do not change the angular
momentum of the background (i.e. $\omega_1$ vanished at the axis).

\begin{theorem}
 \label{theorem1}
 Axially symmetric linear gravitational perturbations
 $(\sigma_{1},\, \omega_{1})$ for the extreme Kerr black hole satisfy the
 following bound
\begin{align}
\label{cota1}
  & \int_{ \Rdm}\left(\dfrac{1}{2}\eta_{0}^{2}|\partial\bar{\omega}_{1}|^{2}+|\partial\eta_{0}|^{2}\bar{\omega}_{1}^{2}+  |\partial\sigma_{1}|^{2}+\dfrac{\sigma_{1}^{2}}{r^{2}}\right)\rho d\rho dz\leq  C m_{2} , \\
\label{cota2}
  & \int_{\Rdm}\left( \left(\Ldt\sigma_{1}\right)^{2}+\eta_{0}^{2}\left(\Ldta\bar{\omega}_{1}\right)^{2} \right)e^{-2(q_{0}+\sigma_{0})} \rho d\rho dz\leq  C \left(\mt+m_{2} \right),
 \end{align}
 where $C$ is a positive constant that depends only on the mass $m_0$ of the
 background extreme Kerr black hole.
\end{theorem}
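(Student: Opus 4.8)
The plan is to deduce \eqref{cota1} directly from the explicit form \eqref{eq:3mass} of the conserved energy $m_2$, and then to obtain \eqref{cota2} by inserting the linearized evolution equations of Appendix~\ref{linearized-equations} and controlling the resulting terms with \eqref{cota1} and the conserved energy $\mt$ of \eqref{eq:3mass2}.

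First I would rewrite $m_2$ in terms of the rescaled twist $\bar{\omega}_1$ defined in \eqref{eq:omegabarra}. Using $e^{\sigma_0}=\eta_0/\rho^2$, the substitution $\omega_1=\eta_0^2\bar{\omega}_1$ turns the twist contribution of $m_2$ into $\tfrac12\eta_0^2|\partial\bar{\omega}_1|^2$ plus a cross term proportional to $\eta_0^2\bar{\omega}_1\,\partial\eta_0\,\partial\bar{\omega}_1$ and a term proportional to $|\partial\eta_0|^2\bar{\omega}_1^2$; integrating the cross term by parts and using the background equation for $\eta_0$ yields, up to boundary terms, precisely the first two integrands of \eqref{cota1} together with a non-negative remainder. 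The boundary terms drop because $\omega_1$ vanishes at the axis and because of the fall-off of the perturbation at the horizon and at infinity recorded in Appendix~\ref{s:falloff}. For the $\sigma_1$ part, $m_2$ controls $\int_{\Rdm}|\partial\sigma_1|^2\rho\,d\rho dz$, and the three-dimensional Hardy inequality applied to axially symmetric functions, $\int_{\Rdm} r^{-2}\sigma_1^2\,\rho\,d\rho dz\le 4\int_{\Rdm}|\partial\sigma_1|^2\,\rho\,d\rho dz$, produces the last integrand of \eqref{cota1}. Collecting these estimates gives \eqref{cota1}, with a constant depending only on $m_0$.

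For \eqref{cota2} I would start from the linearized equations of Appendix~\ref{linearized-equations}, which after the rescaling \eqref{eq:omegabarra} take the schematic form $\Ldt\sigma_1=e^{2(q_0+\sigma_0)}\ddot{\sigma}_1+F_\sigma$ and $\Ldta\bar{\omega}_1=e^{2(q_0+\sigma_0)}\ddot{\bar{\omega}}_1+F_\omega$, where $F_\sigma$ and $F_\omega$ are lower order in the perturbation, being bounded pointwise by explicit background weights times $|\partial\sigma_1|$, $r^{-1}|\sigma_1|$, $\eta_0|\partial\bar{\omega}_1|$ and $|\partial\eta_0|\,|\bar{\omega}_1|$. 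Squaring, multiplying the first identity by $e^{-2(q_0+\sigma_0)}\rho$ and the second by $\eta_0^2 e^{-2(q_0+\sigma_0)}\rho$, and integrating over $\Rdm$, the principal parts become, up to fixed constant factors, the kinetic contributions $\int e^{2(q_0+\sigma_0)}\ddot{\sigma}_1^2\rho\,d\rho dz$ and $\int \eta_0^2 e^{2(q_0+\sigma_0)}\ddot{\bar{\omega}}_1^2\rho\,d\rho dz$ to the energy $\mt$ obtained from $m_2$ by replacing $(\sigma_1,\bar{\omega}_1)$ with $(\dot{\sigma}_1,\dot{\bar{\omega}}_1)$, and hence are bounded by $C\mt$. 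The cross terms are absorbed with Young's inequality, and the contributions of $F_\sigma$ and $F_\omega$ are bounded, once the background weights are shown to be controlled by the weights on the left of \eqref{cota1}, by the left-hand side of \eqref{cota1} and therefore by $Cm_2$. Adding the two inequalities gives \eqref{cota2}.

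The step I expect to be the main obstacle is precisely this last comparison of background weights: one must check, on all of $\Rdm$, that the potential multiplying $\sigma_1^2$ in $m_2$ and each coefficient multiplying $F_\sigma$ and $F_\omega$ is pointwise dominated by the corresponding weight $1$, $r^{-2}$, $\eta_0^2$ or $|\partial\eta_0|^2$ appearing in \eqref{cota1}. This is delicate in the three regimes where the background degenerates: near the horizon $r=0$, where the near-horizon geometry of extreme Kerr governs the growth of $\sigma_0$ and $q_0$; near the axis $\rho=0$, where $\eta_0$ vanishes like $\rho^2$; and at spatial infinity. The explicit expressions for $\eta_0$, $\omega_0$, $q_0$ and $\sigma_0$ collected in Appendix~\ref{s:falloff} are exactly what make these comparisons possible, and the assumed boundary and fall-off conditions on the perturbation are what ensure that every integration by parts used above produces no boundary contribution.
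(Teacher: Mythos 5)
There is a genuine gap in your argument for \eqref{cota1}, and it sits exactly where the paper itself says the subtlety lies. You assert that ``$m_2$ controls $\int_{\Rdm}|\partial\sigma_1|^2\rho\,d\rho dz$'' and then obtain $\int\sigma_1^2 r^{-2}\rho\,d\rho dz$ by a Hardy inequality. But $m_2$ does not directly control $\int|\partial\sigma_1|^2$: the relevant term in the energy density \eqref{eq:2epsilon} is $\left(\partial\sigma_1+\bar\omega_1\partial\omega_0\right)^2$, and to extract $\tfrac12|\partial\sigma_1|^2$ from it you must absorb the cross term, which costs you $\int|\partial\omega_0|^2\bar\omega_1^2\rho\,d\rho dz$. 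Controlling that integral (via the identity obtained by multiplying the background equation \eqref{eq:7eta} by $\eta_0^2\bar\omega_1^2$ and integrating by parts --- this is Lemma \ref{l:2} of the paper) requires the bounds on $\int\eta_0^2|\partial\bar\omega_1|^2$ and $\int|\partial\eta_0|^2\bar\omega_1^2$; and the latter, coming from the last term of \eqref{eq:2epsilon}, in turn requires you to already control $\int\sigma_1^2 r^{-2}\rho\,d\rho dz$, since the unavoidable companion term is $\eta_0^{-2}|\partial\omega_0|^2\sigma_1^2\sim C\sigma_1^2/r^2$. Your proposed order of deductions is therefore circular. The same problem infects your treatment of the twist part: the fifth and sixth terms of $\den_2$ couple $\bar\omega_1$ to $\sigma_1\partial\omega_0/\eta_0$, and the ``non-negative remainder'' you invoke after integrating by parts is not non-negative without the $\sigma_1^2/r^2$ bound in hand.

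The missing idea is that $m_2\geq\int_{\Rdm}\sigma_1^2 r^{-2}\rho\,d\rho dz$ must be proved \emph{first} and by an entirely different mechanism: the paper uses the Schoen--Zhou convexity estimate for the second variation of the mass functional, which gives $m_2\geq 2\int|\partial\mathbf{d}|^2\rho\,d\rho dz$ with $\mathbf{d}$ the hyperbolic distance between $(\eta_1,\omega_1)$ and $(\eta_0,\omega_0)$, combined with a weighted Poincar\'e inequality and the pointwise bound $|\mathbf{d}|\geq|\sigma_1|$. No rearrangement of the quadratic form $\den_2$ alone produces this. Once that lemma is granted, the remaining chain of Cauchy inequalities you sketch does go through in essentially the paper's order. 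Your argument for \eqref{cota2} is structurally the paper's (square the elliptic part of the evolution equations, weight by $e^{-2(\sigma_0+q_0)}$, bound the time-derivative term by $\mt$ and the rest by \eqref{cota1} using \eqref{eq:falloffso}--\eqref{eq:combi2}), except that the principal term is $e^{2(\sigma_0+q_0)}\dot p$ with $p$ given by \eqref{eq:pkerr}, not $e^{2(\sigma_0+q_0)}\ddot\sigma_1$; the shift-vector contributions mean $\dot p\neq\ddot\sigma_1$, and it is precisely $\dot p^2$ that appears in $\bar\den_2$, which is why the argument closes.
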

The conserved quantity $m_2$ involves first spatial derivatives of
$\sigma_1$ and $\omega_1$, the quantity $\mt$ involves also second spatial
derivatives of $\sigma_1$ and $\omega_1$. Note however, that these terms
appears in a rather complicated way and hence it is by no means obvious that
$m_2$ and $\mt$ satisfy the bound (\ref{cota1}) and (\ref{cota2}).

Besides $\sigma_1$ and $\omega_1$, gravitational perturbations involve other
quantities (the shift vector $\beta_1$, the metric function $q_1$, the second
fundamental form $\chi^{AB}_1$, see \cite{Dain:2014iba} for the details). These
other functions are, in principle, calculated in terms $\sigma_1$ and
$\omega_1$ using the coupled system of equations. It is remarkable that the
estimates (\ref{cota1}) and (\ref{cota2}) can be written purely in terms of the
geometric functions $\sigma_1$ and $\omega_1$ (which precisely encode the
dynamical degree of freedom of the system) without involving the other
functions.

The functions $(\sigma_1, \omega_1)$ satisfy the linear evolution equations
(\ref{eq:evolucion-sigma-kerr})--(\ref{eq:evolucion-omega-kerr}). These
equations have the well known structure of a wave map coupled with a non-trivial
background metric. Recently, a model problem for an analogous wave map (but
without the coupling) was studied in \cite{ionescu14}. Remarkably enough the
estimates proved in theorem \ref{theorem1} make use only on the wave map
structure of the equations but they hold for the complete coupled system.
Also, these estimates are robust in the sense that they make use of energies
that are also available for the  non-linear equations.

Since the estimates (\ref{cota1}) and (\ref{cota2}) essentially control up to
second derivatives of the functions $(\sigma_1, \omega_1)$, using Sobolev
embeddings we can obtain pointwise bounds.  This is, of course, one of the main
motivations to obtain these kind of estimates. Note however, that the different
terms are multiplied by the background functions. Particularly relevant is the
factor $e^{-2(\sigma_0+q_0)}$ that appears in (\ref{cota2}). This explicit
  function is positive, goes to $1$ at infinity but vanished like $r^2$ at the
  origin, where the black hole horizon is located. That means that the estimate
  (\ref{cota2}) degenerate at the origin (but not at infinity) and we can not
  expect to control pointwise the functions $(\sigma_1, \omega_1)$ at the
  origin using only this estimate. In the following corollary we prove
  pointwise bounds outside the horizon. 

Let $\delta >0$ an arbitrary, small, number. We define the following two domains
\begin{align}
\Omega_\delta &=\left\lbrace (\rho,z)\in\Rdm, \text{such that}\,\, 0<\delta \leq r \right\rbrace \\
\Gamma_\delta &=\left\lbrace  (\rho,z)\in\Rdm, \text{such that}\,\, 0<\delta \leq\rho \right\rbrace
\end{align}

We have the following result. 
\begin{corollary}
 \label{corolario}
 Under the same assumptions of theorem \ref{theorem1}, the following pointwise bounds hold
   \begin{align}
 \label{cota3}
 \underset{\Omega_{\delta}}{\sup}|\sigma_{1}|\leq   C_\delta  \left(\mt+m_{2} \right), \\
 \label{cota4}
 \underset{\Gamma_{\delta}}{\sup}|\bar{\omega}_{1}|\leq  C_\delta \left(\mt+m_{2} \right), 
 \end{align}
where the constant $C_\delta$ depends on $m_0$ and $\delta$. 
\end{corollary}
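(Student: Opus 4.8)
The plan is to obtain \eqref{cota3} and \eqref{cota4} from Theorem \ref{theorem1} by combining interior elliptic regularity with a Sobolev embedding, the point being that on $\Omega_\delta$ and $\Gamma_\delta$ every background weight occurring in \eqref{cota1}--\eqref{cota2} is uniformly comparable to a constant. Fix $\delta>0$ and set $\delta'=\delta/2$. Using the explicit form and the fall-off of the background functions recalled in Appendix \ref{s:falloff}, one first checks that on $\Omega_{\delta'}=\{r\ge\delta'\}$ the positive function $e^{-2(q_0+\sigma_0)}$ --- which tends to $1$ at infinity and vanishes only like $r^2$ at the origin --- is bounded below by a constant $c=c(\delta,m_0)>0$, and that on $\Gamma_{\delta'}=\{\rho\ge\delta'\}\subset\Omega_{\delta'}$ (hence away from both the axis and the horizon) the functions $\eta_0$ and $|\partial\eta_0|$ are in addition bounded below by positive constants depending only on $\delta$ and $m_0$.

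Consider first $\sigma_1$, regarded as a smooth axially symmetric function on $\mathbb{R}^3$, with $\Ldt$ the flat $3$-Laplacian and $\rho\,d\rho\,dz$ the flat volume element (up to $2\pi$). From \eqref{cota1}, using $r\ge\delta'$ on $\Omega_{\delta'}$, we get $\int_{\Omega_{\delta'}}\bigl(\sigma_1^2+|\partial\sigma_1|^2\bigr)\rho\,d\rho\,dz\le C_\delta m_2$, and from \eqref{cota2} together with the lower bound on $e^{-2(q_0+\sigma_0)}$, $\int_{\Omega_{\delta'}}(\Ldt\sigma_1)^2\rho\,d\rho\,dz\le C_\delta(\mt+m_2)$. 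For any point $p$ with $r(p)\ge\delta$ the ball $B(p,\delta')$ lies in $\Omega_{\delta'}$, so interior elliptic regularity for the Laplacian gives
\[
\|\sigma_1\|_{H^2(B(p,\delta'/2))}\le C_\delta\bigl(\|\Ldt\sigma_1\|_{L^2(B(p,\delta'))}+\|\sigma_1\|_{L^2(B(p,\delta'))}\bigr)\le C_\delta\sqrt{\mt+m_2},
\]
and the Sobolev embedding $H^2(B(p,\delta'/2))\hookrightarrow L^\infty(B(p,\delta'/2))$ in dimension $3<4$ yields $|\sigma_1(p)|\le C_\delta\sqrt{\mt+m_2}$. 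Taking the supremum over $p\in\Omega_\delta$ gives \eqref{cota3}.

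For $\bar\omega_1$ one argues analogously on $\Gamma_\delta$, now working directly in the two spatial coordinates $(\rho,z)$: since $\Gamma_{\delta'}$ stays away from $\rho=0$, there $\Ldta=\Ld+5\partial_\rho(\cdot)/\rho$ is a uniformly elliptic operator with smooth bounded coefficients, and the density $\rho$ is bounded below by $\delta'$. From \eqref{cota1} and the lower bounds on $\eta_0^2$ and $|\partial\eta_0|^2$ we get $\int_{\Gamma_{\delta'}}\bigl(\bar\omega_1^2+|\partial\bar\omega_1|^2\bigr)\rho\,d\rho\,dz\le C_\delta m_2$, and from \eqref{cota2}, $\int_{\Gamma_{\delta'}}(\Ldta\bar\omega_1)^2\rho\,d\rho\,dz\le C_\delta(\mt+m_2)$; since $\rho\ge\delta'$ these bound the corresponding unweighted $L^2$ norms on balls $B(p,\delta')\subset\Gamma_{\delta'}$ with $p\in\Gamma_\delta$. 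Interior elliptic regularity for $\Ldta$ then bounds $\|\bar\omega_1\|_{H^2(B(p,\delta'/2))}$ by $C_\delta\sqrt{\mt+m_2}$, and the Sobolev embedding $H^2\hookrightarrow L^\infty$ in dimension $2<4$ gives $|\bar\omega_1(p)|\le C_\delta\sqrt{\mt+m_2}$; the supremum over $\Gamma_\delta$ yields \eqref{cota4}.

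Everything after the reduction is standard, so the step requiring care is the first one: verifying from the explicit extreme Kerr background that $e^{-2(q_0+\sigma_0)}$, $\eta_0$ and $|\partial\eta_0|$ are uniformly bounded below on the relevant domains --- in particular that $\eta_0$ has no critical points away from the axis and the horizon, so that the term $|\partial\eta_0|^2\bar\omega_1^2$ in \eqref{cota1} does function as an $L^2$ anchor for $\bar\omega_1$ --- and noting that for $\sigma_1$ one must use the genuine $3$-dimensional elliptic theory, in which the symmetry axis is a regular interior point, rather than the two-dimensional operator $\Ldt$, whose drift term $\partial_\rho(\cdot)/\rho$ is singular on the part of $\Omega_\delta$ lying on the axis. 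For $\bar\omega_1$ this difficulty is absent precisely because $\Gamma_\delta$ avoids the axis, which is the reason the two pointwise bounds are stated on different domains.
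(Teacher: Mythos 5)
Your route (interior elliptic regularity on balls plus local Sobolev embedding, treating $\sigma_1$ as an axisymmetric function on $\mathbb{R}^3$ and $\Ldta$ as a uniformly elliptic 2D operator away from the axis) is genuinely different from the paper's, which multiplies $\sigma_1$ by a cutoff $1-\chi_\delta$ vanishing near the origin and then applies a single \emph{global} Sobolev-type inequality (Lemma B.1 of \cite{Dain:2014iba}), namely a bound for $\sup|v|$ in terms of $\int\left(\left(\Ldt v\right)^2+|\partial v|^2\right)\rho\,d\rho\,dz$ alone, with the transition annulus $A_{2\delta}$ handled separately. That difference is not cosmetic: it is exactly where your argument has a gap. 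You claim that \eqref{cota1} together with $r\geq\delta'$ gives $\int_{\Omega_{\delta'}}\sigma_1^2\,\rho\,d\rho\,dz\leq C_\delta m_2$. This does not follow: \eqref{cota1} controls $\int\sigma_1^2/r^2$, and the weight $1/r^2$ is a help only near the origin --- on $\Omega_{\delta'}$ the factor $r^2$ you must reinstate is \emph{unbounded above}. For the expected decay $\sigma_1\sim 1/r$ the weighted integral converges while $\int_{\Omega_{\delta'}}\sigma_1^2\,\rho\,d\rho\,dz$ diverges, so the claimed $L^2$ bound is not merely unproved but generically false. Localizing does not save you: on a ball $B(p,\delta')$ you only get $\|\sigma_1\|^2_{L^2(B(p,\delta'))}\leq C\,(r(p)+\delta')^2\,m_2$, so the constant in your interior estimate, and hence in the pointwise bound, degrades like $r(p)^2$ as $p\to\infty$, and the supremum over all of $\Omega_\delta$ is not obtained. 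The paper avoids this precisely by using a translation- and scale-invariant inequality that needs only $\|\partial v\|_{L^2}$ and $\|\Ldt v\|_{L^2}$ globally, and by invoking the $1/r^2$-weighted bound on $\sigma_1^2$ only in the \emph{bounded} annulus $A_{2\delta}$ where the cutoff derivatives are supported and where $r\leq 2\delta$. To repair your argument you would either need such a global inequality (at which point you have reproduced the paper's proof), or a Gagliardo--Nirenberg bound $\sup|v|^2\leq C\|\partial v\|_{L^2}\|D^2v\|_{L^2}$ on the far region, or some additional input giving uniform local $L^2$ control of $\sigma_1$ at infinity.

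Two further points. First, your conclusion reads $\sup|\sigma_1|\leq C_\delta\sqrt{\mt+m_2}$, which is the dimensionally correct form but is not literally the stated \eqref{cota3}; this mismatch is inherited from the paper's own non-homogeneous statement of its Lemma B.1 (its inequality \eqref{eq:4} bounds $\sup|\bar\sigma_1|$ by a quadratic expression), so I would not count it against you, but you should state which convention you are using. Second, for $\bar\omega_1$ your argument does go through on $\Gamma_{\delta'}$ \emph{provided} $|\partial\eta_0|$ is bounded below there (at infinity $|\partial\eta_0|\sim 2\rho\geq 2\delta'$, so the issue is only possible interior critical points of $\eta_0$); you correctly flag this as the step requiring verification, and it is worth noting that the paper's one-line remark that ``the argument is similar'' for $\bar\omega_1$ silently relies on the same fact, since $|\partial\eta_0|^2\bar\omega_1^2$ is the only term in \eqref{cota1} giving $L^2$ control of $\bar\omega_1$ itself. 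Your observation that one should work with the genuine 3D Laplacian for $\sigma_1$ (so the axis is an interior point) and with the uniformly elliptic 2D operator for $\bar\omega_1$ on $\Gamma_\delta$ (sidestepping the failure of $H^2\hookrightarrow L^\infty$ in seven dimensions) is a good one and is cleaner than what the paper writes down.
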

The bounds (\ref{cota3}) and (\ref{cota4}) are not intended to be sharp, they
are meant as example of possible pointwise bounds that can be deduced from
(\ref{cota1}) and (\ref{cota2}).  It is certainly conceivable that sharped
weighted bounds can be proved using the estimates (\ref{cota1}) and
(\ref{cota2}). But it is also clear that no pointwise bound at the horizon can
be proved using these estimates, because the factor $e^{-2(\sigma_0+q_0)}$
vanishes there. The situation strongly resemble the problem studied in
\cite{Dain:2012qw}. In that article the wave equation on the extreme
Reissner-Nordstr\"om black hole was analyzed using conserved energies.  In
order to prove a pointwise bound at the horizon it was not enough with the
first two energies. An extra energy which involves ``integration in time'' was
needed. It is a relevant open question whether the same strategy can be applied
to the present case, which is certainly much more complicated.

\section{Proof of Theorem \ref{theorem1}}
\label{sec:proof-theor}
In this section we prove theorem \ref{theorem1}. We begin with the estimate
(\ref{cota1}), which represents the most important part of the theorem.  In the
integrand on the left hand side of (\ref{cota1}) the terms involve up to first
derivatives of $\sigma_1$ and $\omega_1$. The integral is bounded only with the
energy $m_2$, the higher order energy $\mt$ is not needed for this estimate.
Moreover, to prove the bound (\ref{cota1}) we will make use only of the last
three terms in the energy density $\den_2$ given by (\ref{eq:2epsilon}). Note
that in $\den_2$  appears the same terms as in the integrand on the left hand
side of (\ref{cota1}). However they appear arranged in different form
(i.e. there are many cross products) and it not obvious how to deduce
the bound (\ref{cota1}). 

The proof of (\ref{cota1}) can be divided in two parts. The first part consists
in integral estimates, this is the subtle part of the proof. The second part
consists on pointwise estimates. In the arguments, we make repeatedly use of
different forms of the standard Cauchy inequality, for readability we summarize
them bellow. Let $a_1\cdots a_n$ be arbitrary real numbers, then we have
\begin{equation}
 \label{eq:desigualdaddecauchy}
 a_1^{2}+a_2^{2}\geq\dfrac{1}{2}(a_1+a_2)^{2},
\end{equation}
and, in general, 
\begin{equation}
 \label{eq:desigualdaddecauchy1}
a_{1}^{2}+a_{2}^{2}+...\,+a_{n}^{2}\geq\dfrac{1}{n}\left(a_{1}+a_{2}+...\,+a_{n}\right)^{2}.
\end{equation}
Let $\lambda>0$, then 
\begin{equation}
 \label{eq:desigualdadgeneraldecauchy}
 a_1a_2\leq\lambda a_1^{2}+\dfrac{a_2^{2}}{4\lambda}.
\end{equation}

In the following two lemmas we prove the relevant integral estimates. The
relevance of lemma \ref{l:1} in the proof of the estimate (\ref{cota1}) is
clear: in this lemma the bound for the fourth term in (\ref{cota1}) is proved.
This integral bound is the key to prove the bounds for the second and the third
term in (\ref{cota1}). We will see in the following, that in order to prove
these bounds we will need the integral estimate proved in lemma \ref{l:2} with
$v=\bar\omega_1$.

\begin{lemma}
\label{l:1}
Consider the mass $m_2$ given by (\ref{eq:3mass}) and
(\ref{eq:2epsilon}). Then, the following inequality holds
\begin{equation}
 \label{eq:8-desigualdad8}
 m_{2}\geq\int_{\Rdm}\dfrac{\sigma_{1}^{2}}{r^{2}} \, \rho d\rho dz.
\end{equation}

\end{lemma}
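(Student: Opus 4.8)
The plan is to extract from the energy density $\den_2$ in (\ref{eq:2epsilon}) only the terms involving $\sigma_1$ (discarding the twist terms, which are manifestly nonnegative), and to show that what remains, after integration over $\Rdm$ with the volume element $\rho\,d\rho\,dz$, dominates $\int_{\Rdm}\sigma_1^2/r^2\,\rho\,d\rho\,dz$. The expected structure of the relevant part of $\den_2$ is a sum of a ``gradient'' term proportional to $|\partial\sigma_1|^2$ together with cross terms of the schematic form $\sigma_1\,\partial\sigma_1\cdot\partial(\text{background})$ and a ``potential'' term proportional to $\sigma_1^2$ times an explicit background quantity. So the first step is to isolate those $\sigma_1$-terms and write them as $|\partial\sigma_1 + \sigma_1\,V|^2$ plus a remainder, for a suitable background vector field $V$ built from $\partial\sigma_0$ (or equivalently from $\partial\eta_0/\eta_0$); completing the square this way is the natural device since the cross terms are linear in $\partial\sigma_1$.

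Next I would integrate by parts the cross term $\int 2\sigma_1\,\partial\sigma_1\cdot V\,\rho\,d\rho\,dz = \int \partial(\sigma_1^2)\cdot V\,\rho\,d\rho\,dz = -\int \sigma_1^2\,\nabla\cdot(\rho V)\,d\rho\,dz$, using the fall-off and boundary conditions from \cite{Dain:2014iba} to kill the boundary terms at the axis, at the horizon $r=0$, and at infinity. This converts the energy into $\int |\partial\sigma_1|^2\rho\,d\rho\,dz$ plus $\int \sigma_1^2\,W\,\rho\,d\rho\,dz$ for an explicit background potential $W$, and the claim reduces to the purely pointwise (in the background) inequality $W + (\text{the }\sigma_1^2\text{-coefficient already present in }\den_2) \ge 1/r^2$ wherever the gradient term is not used, or more realistically to a Hardy-type inequality: one keeps a fraction of $\int|\partial\sigma_1|^2\rho\,d\rho\,dz$ in reserve and invokes the two-dimensional (cylindrically-symmetric) Hardy inequality $\int |\partial f|^2\rho\,d\rho\,dz \ge c\int f^2 r^{-2}\rho\,d\rho\,dz$ to absorb any wrong-sign piece of $W$. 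Here one must check that the constant works out — i.e. that the background potential $W$ is bounded below by $-(1-c')/r^2$ with room to spare — which is where the explicit asymptotics of $\sigma_0$, $\eta_0$, $q_0$ reviewed in appendix \ref{s:falloff} enter: near $r=0$ one needs the leading $r^{-2}$ behaviour to have the right coefficient, and near infinity one needs $W$ to decay fast enough that the $r^{-2}$ weight is easily beaten.

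The main obstacle I anticipate is precisely this sign/coefficient bookkeeping at the horizon: the factor $e^{\sigma_0}=\eta_0/\rho^2$ and the twist function $\omega_0$ of extreme Kerr both behave delicately as $r\to 0$, so the $\sigma_1^2$-coefficient appearing in $\den_2$ may itself be singular there, and one has to verify that it conspires with the integrated-by-parts term $W$ to produce a clean $r^{-2}$ lower bound rather than a worse singularity. A secondary technical point is justifying the integration by parts — i.e. that $\sigma_1^2\,\rho\,V\to 0$ on all three pieces of $\partial\Rdm$ — which should follow from the assumed asymptotics but must be stated. If the direct completion-of-the-square does not close, the fallback is to split $|\partial\sigma_1|^2 = \theta|\partial\sigma_1|^2 + (1-\theta)|\partial\sigma_1|^2$ and tune $\theta\in(0,1)$ together with the parameter $\lambda$ in (\ref{eq:desigualdadgeneraldecauchy}) when estimating the cross term, trading gradient control for pointwise control until the $r^{-2}$ bound is reached.
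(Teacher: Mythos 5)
There is a genuine gap here, and it comes from a misreading of the structure of the energy density (\ref{eq:2epsilon}). The cross terms in $\den_2$ are not of the schematic form $\sigma_1\,\partial\sigma_1\,\partial(\text{background})$: the quantity $\sigma_1$ never multiplies its own gradient against a background field. Instead, the three ``stationary'' squares couple $\sigma_1$ to the \emph{perturbation} $\omega_1$, e.g.\ the fourth term is $\left(\partial\sigma_1+\bar\omega_1\partial\omega_0\right)^2$ and the fifth and sixth contain $\eta_0^{-1}\sigma_1\partial\omega_0$ paired against $\partial(\omega_1\eta_0^{-1})$ and $\bar\omega_1\partial\eta_0$. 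So the integration by parts you describe has nothing to act on, and you cannot ``discard the twist terms'' to isolate a clean quadratic form in $\sigma_1$: extracting $\tfrac12|\partial\sigma_1|^2$ from the fourth term costs you $\bar\omega_1^2|\partial\omega_0|^2$, and controlling that quantity by $m_2$ (via Lemma \ref{l:2} and the fifth/sixth terms of $\den_2$) is done in the paper precisely by \emph{using} Lemma \ref{l:1}. Your route is therefore circular: the bound $\int|\partial\sigma_1|^2\rho\,d\rho\,dz\leq Cm_2$, which you need before invoking a Hardy inequality for $\sigma_1$, is itself a consequence of the lemma you are trying to prove. Note also that the ``potential'' coefficient $|\partial\omega_0|^2/\eta_0^2$ only satisfies the \emph{upper} bound $C/r^2$ (it degenerates on the axis), so no pointwise lower bound of the form $W\geq c/r^2$ is available.

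The missing idea is that the last three terms of $\den_2$ are the second variation of the harmonic-map (mass) functional into the hyperbolic plane, and the convexity of that functional along geodesics --- Lemma 2.3 of \cite{Schoen:2012nh} --- gives $m_2\geq 2\int_{\Rdm}|\partial\mathbf{d}|^2\rho\,d\rho\,dz$, where $\mathbf{d}$ is the hyperbolic distance between $(\eta_1,\omega_1)$ and $(\eta_0,\omega_0)$. This single scalar $\mathbf{d}$ absorbs all the $\sigma_1$--$\omega_1$ cross terms at once. One then applies the weighted Poincar\'e (Hardy) inequality of \cite{Dain05d} to $\mathbf{d}$ --- this is the one ingredient your proposal correctly anticipates, but applied to the wrong function --- and finishes with the pointwise bound $|\mathbf{d}|\geq|\sigma_1|$ from \cite{Dain06c}. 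Without the convexity input the elementary completion-of-squares does not close.
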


\begin{proof}
  The mass $m_2$ is the second variation of the total ADM mass (see
  \cite{Dain:2014iba}). The first three terms in (\ref{eq:2epsilon}) correspond
  to the dynamical part of the mass (these terms vanished for stationary
  solutions), the last three terms correspond to the stationary part of the
  mass. These terms are precisely the second variation of the mass functional
  extensively studied in connection with the mass angular momentum inequality
  (see \cite{Dain06c}, \cite{dain12} and reference therein). In a recent
  article \cite{Schoen:2012nh}, an important estimate has been proved for the
  second variation of this functional in terms of the distance function in the
  hyperbolic plane. From lemma 2.3 in \cite{Schoen:2012nh} we deduce the
  following inequality
\begin{equation} 
\label{eq:5desigualdad5}
m_2\geq2\int_{\Rdm}|\partial
\mathbf{d}((\eta_{1},\omega_{1}),(\eta_{0},\omega_{0}))|^{2} \, \rho d\rho dz,
 \end{equation} 
 where $\mathbf{d}$ is the distance function in the hyperbolic plane between the two
 points $(\eta_{1},\omega_{1})$ and $(\eta_{0},\omega_{0})$, where
 $\eta_1=\rho^2 e^{\sigma_1}$ (see, for example, \cite{Dain06c} for the
 explicit expression of $\mathbf{d}$).

To obtain the desired  lower bound for  the right hand side of the inequality
(\ref{eq:5desigualdad5}) we first use the following weighted Poincare
inequality proved in \cite{Dain05d} (equation (31) in \cite{Dain05d} with $\delta=-1/2$)
\begin{equation}
 \label{eq:6desigualdad6}
  2\int_{\Rdm}|\partial \mathbf{d}|^{2} \, \rho d\rho dz\geq\int_{\Rdm} \frac{\mathbf{d}^{2}}{r^{2}}\,\rho d\rho dz,
\end{equation}
and then we use the following bound for the distance function $\mathbf{d}$
proved in \cite{Dain06c} (see equation (138) in that reference)
\begin{equation}
 \label{eq:7desigualdad7}
 |\mathbf{d}|\geq|\sigma_{1}|.
\end{equation}

\end{proof}

\begin{lemma}
\label{l:2}
Let $\eta_0$ and  $\omega_0$ be the norm and the twist function for the extreme
Kerr black hole, and let $v$ be an arbitrary smooth function with
compact support outside  the axis. Then, the following inequality holds
\begin{equation}
 \label{eq:oes}
 \int_{\Rdm}|\partial\omega_{0}|^{2} v^{2}\rho d\rho dz \leq
 3\int_{\Rdm}|\partial\eta_{0}|^{2} v^{2}\rho d\rho
 dz+\int_{\Rdm}\eta_{0}^{2}|\partial v |^{2}\rho d\rho dz.   
\end{equation}  
\end{lemma}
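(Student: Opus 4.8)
The plan is to express the background twist potential $\omega_0$ in terms of $\eta_0$ using the stationarity equations for extreme Kerr, so that the integrand $|\partial\omega_0|^2 v^2$ is converted into something comparable with $|\partial\eta_0|^2 v^2$ plus an error that can be absorbed by an integration by parts producing the $\eta_0^2|\partial v|^2$ term. Concretely, for the extreme Kerr background the twist $\omega_0$ and the norm $\eta_0$ satisfy a relation of the schematic form $\partial\omega_0 = \eta_0^{-2}(\text{something})$ coming from the harmonic-map (stationary axisymmetric) equations reviewed in the appendix; more useful is the structural identity that the pair $(\eta_0,\omega_0)$ is a harmonic map into hyperbolic space, which gives $\Ldt \omega_0 = 2\rho^{-2}\,\partial\eta_0\,\partial\omega_0$ or equivalently $\partial_\rho(\rho^{-1}\eta_0^{-2}\partial_\rho\omega_0) + \partial_z(\rho^{-1}\eta_0^{-2}\partial_z\omega_0)=0$ in an appropriate weighted form. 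I would start from whichever of these is recorded in the appendix and is cleanest to differentiate.

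The key computational step is then the following. Write $I=\int_{\Rdm}|\partial\omega_0|^2 v^2\,\rho\,d\rho\,dz$. Using the background equation for $\omega_0$, integrate by parts to move one derivative off $\omega_0$: schematically $I = -\int \omega_0\,\mathrm{div}\!\left(v^2\rho\,\partial\omega_0\right) + (\text{boundary})$, expand the divergence, substitute the equation satisfied by $\omega_0$, and then control the cross term $\int \omega_0\,\partial\omega_0\cdot\partial(v^2)\,\rho$ by Cauchy's inequality \eqref{eq:desigualdadgeneraldecauchy} with a well-chosen $\lambda$, reabsorbing the resulting $\int|\partial\omega_0|^2 v^2\rho$ into the left side. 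The remaining term, after using the explicit relation between $\omega_0$, $\eta_0$ and their derivatives for extreme Kerr, should take the form of a multiple of $\int|\partial\eta_0|^2 v^2\rho$ together with $\int \eta_0^2|\partial v|^2\rho$; tracking the constants carefully should yield exactly the coefficients $3$ and $1$ in \eqref{eq:oes}. The compact support of $v$ away from the axis and the known falloff of $\eta_0,\omega_0$ (appendix \ref{s:falloff}) guarantee that all boundary terms — at the axis, at the origin, and at infinity — vanish, so the integrations by parts are justified.

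An alternative, possibly more transparent route, is to use the pointwise algebraic structure: for extreme Kerr one has an explicit formula $|\partial\omega_0|^2 = f(\rho,z)\,\eta_0^2$ or a bound $|\partial\omega_0|^2 \le c\,\eta_0^2\,W$ for an explicit weight $W$, combined with the fact that $|\partial\eta_0|^2$ is comparable to $\eta_0^2\,W'$ for a related weight; then \eqref{eq:oes} would follow from a weighted Hardy–Poincaré inequality in the $(\rho,z)$ half-plane, exactly of the type invoked in Lemma \ref{l:1}. I would check the explicit extreme-Kerr expressions first to see whether such a clean pointwise comparison holds, since that would bypass the integration by parts entirely.

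The main obstacle I anticipate is getting the numerical constants right: the inequality is stated sharply with constants $3$ and $1$, so the choice of the parameter $\lambda$ in Cauchy's inequality and the bookkeeping of the cross terms coming from the background equation must be done precisely rather than crudely, and one must verify that no term with the ``wrong sign'' (e.g. involving $\partial\eta_0\,\partial v$ cross products) survives — or, if it does, that it too can be absorbed. The second delicate point is confirming rigorously that all boundary contributions vanish; this rests entirely on the falloff behaviour of the \emph{background} functions near $r=0$ and $r=\infty$ recalled in appendix \ref{s:falloff}, which must be strong enough that $\eta_0^2\,\omega_0\,\partial\omega_0$ and related products integrate to zero against the compactly supported $v$.
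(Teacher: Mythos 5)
Your proposal assembles the right ingredients (the stationary harmonic-map equations, one integration by parts, and Cauchy's inequality), but it integrates by parts on the wrong member of the pair, and this creates a gap I do not see how to close. If you start from $I=\int|\partial\omega_0|^2v^2\rho\,d\rho dz$ and move a derivative off $\omega_0$ using the divergence-form equation $\partial^A(\rho\eta_0^{-2}\partial_A\omega_0)=0$ (equivalently $\Ldt\omega_0=2\eta_0^{-1}\partial\eta_0\partial\omega_0$), every term you generate carries an \emph{undifferentiated} factor of $\omega_0$, schematically
\begin{equation*}
I=-2\int_{\Rdm}\omega_0\,v^2\,\frac{\partial\eta_0\,\partial\omega_0}{\eta_0}\,\rho\,d\rho dz-2\int_{\Rdm}\omega_0\,v\,\partial v\,\partial\omega_0\,\rho\,d\rho dz .
\end{equation*}
Reabsorbing $|\partial\omega_0|^2v^2$ into the left side via Cauchy then leaves you with $\int\omega_0^2\eta_0^{-2}|\partial\eta_0|^2v^2\rho$ and $\int\omega_0^2|\partial v|^2\rho$, not the terms $|\partial\eta_0|^2v^2$ and $\eta_0^2|\partial v|^2$ appearing in \eqref{eq:oes}. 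Since $\omega_0$ tends to distinct nonzero constants on the two components of the axis while $\eta_0$ vanishes there, no normalization $\omega_0\to\omega_0-c$ gives a bound $|\omega_0-c|\le C\eta_0$ with $C=C(m_0)$; hence these terms cannot be converted into the right-hand side of \eqref{eq:oes} with a universal constant (a constant depending on the support of $v$ is useless, because the lemma is applied to $v=\bar\omega_1$ by density). Your alternative pointwise route would make the $\eta_0^2|\partial v|^2$ term superfluous and does not hold either.

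The paper instead exploits the \emph{other} harmonic-map equation, $\Ldt(\ln\eta_0)=-|\partial\omega_0|^2/\eta_0^2$ (equation \eqref{eq:7eta}), in which $|\partial\omega_0|^2$ already appears with a definite sign as a source and $\omega_0$ itself never enters undifferentiated. Multiplying by $\eta_0^{2}v^2$ (the case $\delta=-1$ of a one-parameter family of multipliers $\eta_0^{-2\delta}v^2$) and integrating by parts on the Laplacian of $\ln\eta_0$ yields the exact identity
\begin{equation*}
\int_{\Rdm}|\partial\omega_0|^2v^2\rho\,d\rho dz=2\int_{\Rdm}|\partial\eta_0|^2v^2\rho\,d\rho dz+2\int_{\Rdm}\eta_0\,v\,\partial v\,\partial\eta_0\,\rho\,d\rho dz,
\end{equation*}
after which a single application of \eqref{eq:desigualdadgeneraldecauchy} with $\lambda=1/2$ to the last term produces exactly the constants $3$ and $1$. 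No falloff of $\omega_0$ is needed, and the compact support of $v$ away from the axis disposes of all boundary terms. You should redo your computation starting from \eqref{eq:7eta}.
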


\begin{proof}
 We will use that the background function $\eta_0$ and $\omega_0$ satisfy
 equation (\ref{eq:7eta}).
Let $v$ is an arbitrary function with  compact support outside of the axis. We
multiply (\ref{eq:7eta}) by $\eta_{0}^{-2\delta}v^{2}$ (where $\delta$ is an
arbitrary number) and integrate, we obtain
\begin{equation}
 \label{eq:9eta}
 \int_{\Rdm}\eta_{0}^{-2\delta}v^{2}\Ldt (\ln\eta_{0})\rho d\rho dz=-\int_{\Rdm}\dfrac{|\partial\omega_{0}|^{2}}{\eta_{0}^{2}}\eta_{0}^{-2\delta}v^{2}\rho d\rho dz.
\end{equation}
Integrating by parts the left hand side  of (\ref{eq:9eta}) we obtain the
following useful identity 
\begin{multline}
 \label{eq:10eta}
 \int_{\Rdm}|\partial\omega_{0}|^{2}\eta_{0}^{-2\delta-2}v^{2}\rho d\rho dz=-2\delta\int_{\Rdm}\eta_{0}^{-2\delta-2}|\partial\eta_{0}|^{2}v^{2}\rho d\rho dz
 +\\
+2\int_{\Rdm}\eta_{0}^{-2\delta-1}v\partial v\partial\eta_{0}\rho d\rho dz
\end{multline}  
We take  $\delta=-1$ in (\ref{eq:10eta}), we obtain  
\begin{align}
 \label{eq:desigualdadomegabarra5}
  \int_{\Rdm}|\partial\omega_{0}|^{2}v^{2}\rho d\rho dz
  &=2\int_{\Rdm}|\partial\eta_{0}|^{2} v^{2}\rho d\rho dz 
    +2\int_{\Rdm}\eta_{0}v\partial v\partial\eta_{0}\rho d\rho dz,\\
  &\leq 3\int_{\Rdm}|\partial\eta_{0}|^{2} v^{2}\rho d\rho
    dz+\int_{\Rdm}\eta_{0}^{2} |\partial v|^{2}\rho d\rho
    dz,  
\label{eq:desigualdadomegabarra5a}
\end{align}  
where, to obtain line  (\ref{eq:desigualdadomegabarra5a}) we have used in the
second term of the right hand side of (\ref{eq:desigualdadomegabarra5}) the
inequality  (\ref{eq:desigualdadgeneraldecauchy}) with 
$a_1=\eta_{0}\partial\bar{\omega}_{1}$, 
$a_2=\partial\eta_{0}\bar{\omega}_{1}$ and $\lambda=1/2$.  
\end{proof}

We prove now the pointwise bounds in terms of the energy density $\den_{2}$. In
the following we denote by $C$ a generic positive constant that depends only on
the background parameter $m_0$.  

We begin with the first term in the integrand in (\ref{cota1}).  From the explicit expression
for $\den_2$ given in (\ref{eq:2epsilon}), keeping only the fifth and sixth
terms, we obtain
\begin{align}
 \label{eq:desigualdadomegabarra}
\frac{\den_{2}}{\rho}
  &\geq\left(\partial\left(\omega_{1}\eta_{0}^{-1}\right)-\eta_{0}^{-1}\sigma_{1}\partial\omega_{0}\right)^{2}+\left(\eta_{0}^{-1}\sigma_{1}\partial\omega_{0}-\omega_{1}\eta_{0}^{-2}\partial\eta_{0}\right)^{2}\\
  &=(\eta_{0}\partial\bar{\omega}_{1}+\bar{\omega}_{1}\partial\eta_{0}-\eta_{0}^{-1}\sigma_{1}\partial\omega_{0})^{2}+(\eta_{0}^{-1}\sigma_{1}\partial\omega_{0}-\bar{\omega}_{1}\partial\eta_{0})^{2} 
\label{eq:desigualdadomegabarraa}
\end{align}
where in (\ref{eq:desigualdadomegabarraa}) we have used the definition of
$\bar{\omega}_{1}$ given in (\ref{eq:omegabarra}).  We use the Cauchy
inequality (\ref{eq:desigualdaddecauchy}) in \eqref{eq:desigualdadomegabarraa}
to finally obtain
\begin{equation}
  \label{eq:1b}
 \frac{\den_2}{\rho} \geq\dfrac{1}{2}\eta_{0}^{2}(\partial\bar{\omega}_{1})^{2}.
\end{equation}

From the second term in (\ref{cota1}) we take $\den_2$ given in
(\ref{eq:2epsilon}) and keep only the last term, we obtain
\begin{align}
 \label{eq:desigualdadomegabarra3} 
\frac{\den_2}{\rho} &\geq\left(\dfrac{\partial\omega_{0}}{\eta_{0}}\sigma_{1}-\partial\eta_{0}\bar{\omega}_{1}\right)^{2},\\
&=\dfrac{|\partial\omega_{0}|^2}{\eta^2_{0}}\sigma^2_{1}+|\partial\eta_{0}|^2\bar{\omega}^2_{1}-2
  \sigma_{1} \bar{\omega}_{1} \dfrac{\partial\omega_{0}}{\eta_{0}}\partial\eta_{0},\label{eq:desigualdadomegabarra3a} \\
 &\geq-\dfrac{|\partial\omega_{0}|^{2}}{\eta_{0}^{2}}\sigma_{1}^{2}+\dfrac{1}{2}|\partial\eta_{0}|^{2}\bar{\omega}_{1}^{2}, 
\label{eq:desigualdadomegabarra3b}\\
 &\geq-\dfrac{C}{r^{2}}\sigma_{1}^{2}+\dfrac{1}{2}|\partial\eta_{0}|^{2}\bar{\omega}_{1}^{2} ,
\label{eq:desigualdadomegabarra3c}
\end{align}
where in the inequality (\ref{eq:desigualdadomegabarra3b})  we have used the
Cauchy inequality (\ref{eq:desigualdadgeneraldecauchy}) with $\lambda=1$ and in
 line  (\ref{eq:desigualdadomegabarra3c}) we have used the bound
(\ref{eq:falloffso}) for the background quantities. We have obtained 
\begin{equation}
 \label{eq:desigualdadomegabarra3d}
\frac{\den_2}{\rho} +\dfrac{C}{r^{2}}\sigma_{1}^{2}\geq\dfrac{1}{2}|\partial\eta_{0}|^{2}\bar{\omega}_{1}^{2}. 
\end{equation}
We integrate the bounds  \eqref{eq:1b} and \eqref{eq:desigualdadomegabarra3d},
and use the integral bound (\ref{eq:8-desigualdad8}) to obtain
\begin{equation}
  \label{eq:19}
 \int_{
   \Rdm}\left(\dfrac{1}{2}\eta_{0}^{2}|\partial\bar{\omega}_{1}|^{2}+|\partial\eta_{0}|^{2}\bar{\omega}_{1}^{2} 
+\dfrac{\sigma_{1}^{2}}{r^{2}}\right)\rho d\rho dz\leq  C m_{2} . 
\end{equation}
To prove \eqref{cota1} it only remains to bound the term
$ |\partial\sigma_{1}|^{2}$. For that term,  we use the  fourth term in
(\ref{eq:2epsilon}) to obtain 
\begin{align}
 \label{eq:desigualdadderivadasigma}
 \frac{\den_2}{\rho} &\geq\left(\partial\sigma_{1}+\omega_{1}\eta_{0}^{-2}\partial\omega_{0}\right)^{2}\\
 &=(\partial\sigma_{1}+\bar{\omega}_{1}\partial\omega_{0})^{2},\label{eq:desigualdadderivadasigmaa}\\
 &=|\partial\sigma_{1}|^{2}+\bar{\omega}_{1}^{2}|\partial\omega_{0}|^{2}
   +2 \bar{\omega}_{1}\partial\sigma_{1} \partial\omega_{0},\label{eq:desigualdadderivadasigmab}\\
 &\geq\dfrac{1}{2}|\partial\sigma_{1}|^{2}-|\partial\omega_{0}|^{2}\bar{\omega}_{1}^{2},
\label{eq:desigualdadderivadasigmac}
\end{align}
where in line \eqref{eq:desigualdadderivadasigmaa} we have just used the
definition of $\bar{\omega}_{1}$ and in line
\eqref{eq:desigualdadderivadasigmac} we have used inequality
(\ref{eq:desigualdadgeneraldecauchy}) with $\lambda=\dfrac{1}{4}$. Then, we
have obtained
\begin{equation}
 \label{eq:desigualdadderivadasigmab1} 
\frac{\den_{2}}{\rho}+|\partial\omega_{0}|^{2}\bar{\omega}_{1}^{2}\geq\dfrac{1}{2}|\partial\sigma_{1}|^{2}.
\end{equation}
We integrate the pointwise estimate (\ref{eq:desigualdadderivadasigmab1}), to
handle second term on the left hand side of
(\ref{eq:desigualdadderivadasigmab1}) we use the integral bound
(\ref{eq:8-desigualdad8}) $v=\bar\omega_1$ and the bound \eqref{eq:19}. Hence
we have obtained the desired estimate \eqref{cota1}.

We turn to  the bound (\ref{cota2}) which involves second derivatives of the
functions $\sigma_1$ and $\bar \omega_1$ and hence we need the higher order
mass $\mt$.

We begin with the term with $\sigma_1$.  We use the evolution equation
(\ref{eq:evolucion-sigma-kerr}) to obtain
\begin{align}
 \label{eq:sdsigma}
\left(\Ldt\sigma_{1}\right)^{2}&
                                 =\left(e^{2(\sigma_{0}+q_{0})}\dot{p}+\dfrac{2}{\eta^2_0}\left(\sigma_{1}|\partial\omega_{0}|^{2}-\partial \omega_{1}\partial\omega_{0} \right)\right)^{2}\\
 &=\left(e^{2(\sigma_{0}+q_{0})}\dot{p}+\dfrac{2}{\eta^2_0}\left(\sigma_{1}|\partial\omega_{0}|^{2}-2\eta_{0} \bar{\omega}_{1}\partial\eta_{0}\partial\omega_{0} 
 +\eta_{0}^{2} \partial\omega_{0}\partial \bar{\omega}_{1}\right)\right)^{2}\label{eq:sdsigma2}\\
 &\leq 4e^{4(\sigma_{0}+q_{0})}\dot{p}^{2}+\dfrac{16}{\eta_{0}^{4}}\sigma_{1}^{2}|\partial\omega_{0}|^{4}
+\dfrac{64}{\eta_{0}^{2}}\left(\partial\eta_{0}\partial\omega_{0}\right)^{2}\bar{\omega}_{1}^{2}
 +16\left(\partial\omega_{0}\partial \bar{\omega}_{1}\right)^{2} \label{eq:sdsigma3}\\
&\leq 4e^{4(\sigma_{0}+q_{0})}\dot{p}^{2}+C \dfrac{|\partial\omega_{0}|^{2}}{\eta_{0}^{2}}\dfrac{\sigma_{1}^{2}}{r^{2}} 
 +\dfrac{64}{\eta_{0}^{2}}|\partial\eta_{0}|^{2}|\partial\omega_{0}|^{2}\bar{\omega}_{1}^{2}
 +16|\partial\bar{\omega}_{1}|^{2}|\partial\omega_{0}|^{2}, \label{eq:sdsigma4}
\end{align}
where in line (\ref{eq:sdsigma2}) we have used the definition of $\bar{\omega}_{1}$ (\ref{eq:omegabarra}),  in line \eqref{eq:sdsigma3} we have used the inequality  (\ref{eq:desigualdaddecauchy1}) and line \eqref{eq:sdsigma4} follows from the bound  (\ref{eq:falloffso}) and the Cauchy-Schwartz inequality.

Multiplying by  $e^{-2(\sigma_{0}+q_{0})}$ the inequality  \eqref{eq:sdsigma4} we obtain
\begin{align}
\label{eq:sdsigma5}
e^{-2(\sigma_{0}+q_{0})}\left(\Ldt\sigma_{1}\right)^{2} &\leq 4e^{2(\sigma_{0}+q_{0})}\dot{p}^{2}+
e^{-2(\sigma_{0}+q_{0})}\dfrac{|\partial\omega_{0}|^{2}}{\eta_{0}^{2}} \left(
C \dfrac{\sigma_{1}^{2}}{r^{2}}
+64|\partial\eta_{0}|^{2}\bar{\omega}_{1}^{2}
 +16|\partial\bar{\omega}_{1}|^{2} \eta_0^2 \right)\\
 &\leq 2\frac{\bar{\den}_{2}}{\rho}+C\left(   \dfrac{\sigma_{1}^{2}}{r^{2}}
 +|\partial\eta_{0}|^{2} \bar{\omega}_{1}^{2}
 +|\partial\bar{\omega}_{1}|^{2} \eta_0^2\right) \label{eq:sdsigma6}
\end{align}
where in line \eqref{eq:sdsigma6}  we have used (\ref{eq:2epsilon2}) and   the bound  \eqref{eq:combi}.

Integrating (\ref{eq:sdsigma6}) and using the previous bounds we finally obtain
\begin{eqnarray}
 \label{eq:segundaderivadasigma2}
 \int_{\Rdm}e^{-2(\sigma_{0}+q_{0})}\left(\Ldt\sigma_{1}\right)^{2}\rho d\rho dz\leq C\left(\mt+ m_{2}\right).
\end{eqnarray}

To estimate the second derivatives of $\bar{\omega}_{1}$ we proceed in a
similar way. We will make use of the evolution equation
(\ref{eq:evolucion-omega-kerr}). First,  it is  useful to write this equation in terms
of $\bar\omega_1$ instead of $\omega_1$. To do that we first obtain the
following relation 
\begin{equation}
  \label{eq:22}
  \eta_0^2  \, \Ldta \bar\omega_1 = \Ldt \omega_1 -\frac{4}{\rho} \partial_\rho
  \omega_1-4\partial \omega_1 \partial \sigma_0 -2\omega_1 \Ldt
  \sigma_0+8\frac{\omega_1}{\rho} \partial_\rho \sigma_0+4 \omega_1 |\partial \sigma_0|^2,
\end{equation}
where we have used the definition of $\bar \omega_1$ given in
(\ref{eq:omegabarra}), the expression of $\eta_0$ in terms of $\sigma_0$ given
in (\ref{eq:app35}), the definitions of the operators $\Ldt$ and $\Ldta$ given
in (\ref{eq:delta3}) and the
identity (\ref{eq:21}).  Using the evolution equation
(\ref{eq:evolucion-omega-kerr}) and equation (\ref{eq:22}) we obtain
\begin{equation}
  \label{eq:23}
  \eta_0^2  \, \Ldta \bar\omega_1 = e^{2(\sigma_{0}+q_{0}}\dot{d}-2\eta_{0}^{2}\bar{\omega}_{1} \Ldt\sigma_{0}
 -2\eta_{0}^{2}\partial\sigma_{0}\partial\bar{\omega}_{1}  + 2\partial\omega_{0}\partial\sigma_{1}.
\end{equation}
To obtain the estimate, we take the square of each side of equation
(\ref{eq:23}) and use the Cauchy inequality (\ref{eq:desigualdaddecauchy1}) to obtain
\begin{align}
 \eta_{0}^{4}\left(\Ldta\bar{\omega}_{1}\right)^{2}  &\leq 4e^{4(\sigma_{0}+q_{0})}\dot{d}^{2}+16\eta_{0}^{4}\left(\Ldt\sigma_{0}\right)^{2}\bar{\omega}_{1}^{2} 
 +16\eta_{0}^{4}\left(\partial\sigma_{0}\partial\bar{\omega}_{1}\right)^{2}
 +16\left(\partial\omega_{0}\partial\sigma_{1}\right)^{2}  \label{eq:sdomega1}\\
 &\leq 4e^{4(\sigma_{0}+q_{0})}\dot{d}^{2}+16|\partial \omega_0|^2 \left( |\partial \omega_0|^2 \bar{\omega}_{1}^{2} +  |\partial\sigma_{1}|^{2}\right)+
 16\eta_{0}^{4}|\partial\sigma_{0}|^{2}|\partial\bar{\omega}_{1}|^{2} \label{eq:sdomega2}
\end{align}
where in line \eqref{eq:sdomega2} we have used the  Cauchy-Schwartz inequality
and equation (\ref{eq:sta-sigma}) to substitute the factor $\Ldt\sigma_{0}$.
We multiply by $e^{-2(\sigma_{0}+q_{0})}\eta_{0}^{-2}$  each side of   inequality \eqref{eq:sdomega2} 
\begin{multline}
\label{eq:interm}
  \eta_{0}^{2}e^{-2(\sigma_{0}+q_{0})}\left(\Ldta\bar{\omega}_{1}\right)^{2} 
\leq 4\dfrac{e^{2(\sigma_{0}+q_{0})}}{\eta_{0}^{2}}\dot{d}^{2} 
+ 16e^{-2(\sigma_{0}+q_{0})} \frac{|\partial \omega_0|^2}{\eta^2_0} 
\left( |\partial\omega_0|^2 \bar{\omega}_{1}^{2}+ |\partial\sigma_{1}|^{2}\right) \\
+ 16e^{-2(\sigma_{0}+q_{0})}|\partial\sigma_{0}|^{2}\eta_{0}^{2} |\partial\bar{\omega}_{1}|^{2}              
\end{multline}
Then, we bound the first term on the right hand side of inequality
\eqref{eq:interm} with the energy density $\bar{\den}_2$(\ref{eq:2epsilon2}), for the other
terms we use the inequalities (\ref{eq:combi}) and (\ref{eq:combi2})  to bound the background functions by a
constant $C$. We  obtain
\begin{equation}
  \label{eq:24}
  \eta_{0}^{2}e^{-2(\sigma_{0}+q_{0})}\left(\Ldta\bar{\omega}_{1}\right)^{2}
  \leq 2\frac{\bar \den_2}{\rho}+ C \left( 
 |\partial\omega_0|^2 \bar{\omega}_{1}^{2}+ |\partial\sigma_{1}|^{2} 
+ \eta_{0}^{2} |\partial\bar{\omega}_{1}|^{2} \right).
\end{equation}

Integrating \eqref{eq:24} and using (\ref{cota1}) we finally have
\begin{equation}
 \label{eq:segundaderivadaomega2}
 \int_{\Rdm}\eta_{0}^{2}e^{-2(\sigma_{0+}q_{0})}\left(\Ldta\bar{\omega}_{1}\right)^{2}\rho d\rho dz\leq C\left( \mt + m_{2}\right).
\end{equation}

\section{Proof of  Corollary \ref{corolario}}
\label{sec:proof-corollary}
In the proof of the corollary \ref{corolario} we essentially use an
appropriated variant of the Sobolev embedding and standard cut off functions
arguments.

Let $\chi:\mathbb{R}\to \mathbb{R}$
be a smooth cut off function such that $\chi \in C^\infty(\mathbb{R})$,
$0\leq\chi\leq 1$, $\chi(r)=1$ for $0\leq r \leq 1$, $\chi(r)=0$ for $2\leq
r$. Define $\chi_\delta(r) = \chi(r/\delta)$.

Consider the following function
\begin{equation}
  \label{eq:2}
  \bar\sigma_1=(1-\chi_\delta)\sigma_1
\end{equation}
Note that $ \bar\sigma_1=0$ in $B_\delta$ and $\bar \sigma_1=\sigma_1$ in
$\Omega_{2\delta}$, where $B_\delta$ denotes the ball of radius $\delta$, and
$\Omega_{2\delta}=\Rdm \setminus B_{2\delta}$.

The function $\bar \sigma_1$ is smooth and decay at infinity, and then it satisfies the hypothesis of Lemma B.1 in   \cite{Dain:2014iba}.  Hence, the following bounds holds 
\begin{equation}
  \label{eq:4}
  \int_{\Rdm}\left(\left(\Ldt\bar\sigma_1\right)^{2}+|\partial\bar\sigma_{1}|^{2}\right)\rho d\rho dz \geq C\,  \underset{\Rdm}{\sup}|\bar\sigma_1|
\end{equation}
where $C$ is a numerical constant independent of $\bar\sigma_1$ (see also
equations (121) and (122) in \cite{Dain:2014iba} to handle the term with the
Laplacian).

Since  $\sigma_1=\bar \sigma_1 $ in $\Omega_{2\delta}$, we have
\begin{equation}
  \label{eq:5}
  \underset{\Rdm}{\sup}|\bar\sigma_1| \geq \underset{\Omega_{2\delta}}{\sup}|\bar\sigma_1|=\underset{\Omega_{2\delta}}{\sup}|\sigma_1|. 
\end{equation}
That is, if we can bound the integral in the left hand side of the inequality
(\ref{eq:4}) by the energies $m_2$ and $\mt$ then the desired estimate
(\ref{cota3}) follows. To bound this integral we proceed as follows.

We decompose the domain of integration $\Rdm$ in (\ref{eq:4}) in three region
$\Rdm=\Omega_{2\delta}+A_{2\delta} +B_\delta$, where
$A_{2\delta} = B_{2\delta}\setminus B_\delta$.

Define the constant $C_\delta$ by
\begin{equation}
  \label{eq:3}
  C_\delta= \underset{\Omega_{\delta}}{\min} \left\lbrace e^{-2(\sigma_{0}+q_{0})}\right\rbrace.
\end{equation}

For the region $B_\delta$ we have that, by construction,  $\bar\sigma_1=0$ and
hence the integral (\ref{eq:4}) is trivial in  $B_\delta$. 
For the region $\Omega_{2\delta}$ we have $\bar
\sigma_1=\sigma_1$.  For the term with first derivatives we obtain
\begin{align}
  \label{eq:6fd}
 C m_2 &\geq  \int_{\Rdm}|\partial\sigma_{1}|^{2}\,\rho d\rho dz, \\   
&\geq \int_{\Omega_{2\delta}} |\partial\sigma_{1}|^{2} \,\rho d\rho dz,  \label{eq:6fd1}\\
& = \int_{\Omega_{2\delta}} |\partial\bar \sigma_{1}|^{2} \,\rho d\rho dz. \label{eq:6fd12}
\end{align}
Where in (\ref{eq:6fd}) we have used the bound (\ref{cota1}). For the terms with the Laplacian we have
 \begin{align}
  \label{eq:split} 
 C m_2 &\geq \int_{\Rdm} e^{-2(q_{0}+\sigma_{0})}\left(\Ldt\sigma_1\right)^{2} \,\rho d\rho dz, \\   
 &\geq \int_{\Omega_{2\delta}} e^{-2(q_{0}+\sigma_{0})}\left(\Ldt\sigma_1\right)^{2} \,\rho d\rho dz, \\  
&\geq C_\delta  \int_{\Omega_{2\delta}} \left(\Ldt\sigma_1\right)^{2} \,\rho
  d\rho dz, \label{eq:split1} \\
&= C_\delta  \int_{\Omega_{2\delta}} \left(\Ldt\bar\sigma_1\right)^{2} \,\rho d\rho dz,
\end{align}
where in line (\ref{eq:split1}) we have used the definition (\ref{eq:3}). 

It remains only to bound the integral in the transition region $A_{2\delta}$.  For the first derivatives we have
\begin{equation}
  \label{eq:1}
  \partial \bar \sigma_1= (1-\chi_\delta)\partial \sigma_1 -\sigma_1 \partial \chi_\delta, 
\end{equation}
and then we obtain the pointwise estimate
\begin{align}
  \label{eq:6}
    |\partial \bar \sigma_1|^2 & \leq 2 (1-\chi_\delta)^2 |\partial \sigma_1|^2 +2 \sigma^2_1 |\partial \chi_\delta|^2,\\
& \leq  C_\delta \left(|\partial \sigma_1|^2 + \sigma^2_1  \right), 
\end{align}
where we have used that the derivatives of $\chi_\delta$ are bounded by a
constant $C_\delta$ that depends only on $\delta$.  Integrating (\ref{eq:6}) on
$A_{2\delta}$ and using the bound (\ref{cota1}) we obtain
\begin{equation}
  \label{eq:7}
  C_\delta m_2 \geq  \int_{A_{2\delta}} |\partial\bar \sigma_{1}|^{2} \,\rho d\rho dz.
\end{equation}
For the term with the Laplacian we proceed in a similar way, we have
\begin{align}
  \label{eq:8}
  \Ldt\bar\sigma_1=  (1-\chi_\delta)  \Ldt\sigma_1-2\partial \sigma_1 \partial \chi_\delta - \sigma_1 \Ldt \chi_\delta.
\end{align}
Then we obtain
\begin{equation}
  \label{eq:9}
   \left( \Ldt\bar\sigma_1\right)^2 \leq  C_\delta \left(\left( \Ldt \sigma_1\right)^2+ |\partial \sigma_1|^2 + \sigma^2_1   \right),
\end{equation}
where we have used again that all derivatives of $\chi_\delta$ are bounded by a
constant $C_\delta$ . Integrating (\ref{eq:9}) on $A_{2\delta}$, using the
definition (\ref{eq:3}) and the bound we finally obtain
\begin{equation}
  \label{eq:10}
  C_\delta  \int_{A_{2\delta}} \left(\Ldt\bar\sigma_1\right)^{2} \,\rho d\rho dz\leq m_2,
\end{equation}
and hence, collecting all the bounds, we have proved
\begin{equation}
  \label{eq:11}
C_\delta  \int_{\Rdm}\left(\left(\Ldt\bar\sigma_1\right)^{2}+|\partial\bar\sigma_{1}|^{2}\right)\rho d\rho dz \leq m_2.  
\end{equation}

To obtain the bound (\ref{cota4}) for $\bar \omega_1$ the argument is similar.

\section*{Acknowledgments}
It a pleasure to thank Martin Reiris for discussions.  This work was supported
in by grant PICT-2010-1387 of CONICET (Argentina) and grant Secyt-UNC
(Argentina).

\appendix

\section{Axially symmetric linear perturbations of the extreme Kerr black hole}
\label{linearized-equations}
In this appendix we summarize the some of the equations obtained in
\cite{Dain:2014iba} for axially symmetric perturbations for the extreme Kerr
black hole in the maximal-isothermal gauge. In this gauge, Einstein equations
are naturally divided into three groups: evolution equations, constraint
equations and gauge equations.  The evolution equations are further divided
into two groups, evolution equations for the dynamical degree of freedom
$(\sigma_1, \omega_1)$ and evolution equations for the metric (which is
determined by the function $q_1$) and second fundamental form $\chi_{AB}$. In
the proof of theorem \ref{theorem1} we use the evolution equations for
$(\sigma_1, \omega_1)$ given by
\begin{align}
 \label{eq:evolucion-sigma-kerr}
 -e^{2(\sigma_{0}+q_{0})}\dot{p}+^{(3)}\Delta\sigma_{1} &=\dfrac{2}{\eta^2_0}\left(\sigma_{1}|\partial\omega_{0}|^{2}-\partial\omega_{1}\partial\omega_{0} \right),\\
 \label{eq:evolucion-omega-kerr}
 -e^{2(\sigma_{0}+q_{0})}\dot{\po}+^{(3)}\Delta\omega_{1} &=4\dfrac{\partial_{\rho}\omega_{1}}{\rho}+
 2\partial\omega_{1}\partial\sigma_{0}+2\partial\omega_{0}\partial\sigma_{1},
\end{align}
with
\begin{align}
 \label{eq:pkerr}
 p &=\dot{\sigma}_{1}-2\dfrac{\beta_{1}^{\rho}}{\rho}-\beta_{1}^{A}\partial_{A}\sigma_{0},\\
 \label{eq:dkerr}
 \po &=\dot{\omega}_{1}-\beta_{1}^{A}\partial_{A}\omega_{0}.
\end{align}
In these equations the indices $A,B\cdots$ are 2-dimensional, they have the
values $\rho, z$ and $\beta_{1}^{A}$ represents the shift vector of the
foliation. The crucial property of
these equations is that there exists a conserved mass given by
\begin{equation}
 \label{eq:3mass}
 m_{2}=\dfrac{1}{16}\int_{\Rdm}\varepsilon_{2} \, d\rho dz,
\end{equation}
where the positive definite  energy density $\den_2$ is given by 
\begin{multline}
    \label{eq:2epsilon} 
\frac{\den_2}{\rho}=2e^{2(\sigma_{0}+q_{0})}p^{2}+ 2\dfrac{e^{2(\sigma_{0}+q_{0})}}{\eta_{0}^{2}}\po^{2} 
 +4e^{-2u_{0}} \chi^{AB}_{1}\chi_{1AB} + \\
+\left(\partial\sigma_{1}+\omega_{1}\eta^{-2}_{0}\partial\omega_{0}\right)^{2} +\left(\partial\left(\omega_{1}\eta_{0}^{-1}\right)-\eta_{0}^{-1}\sigma_{1}\partial\omega_{0}\right)^{2}
 +\left(\eta_{0}^{-1}\sigma_{1}\partial\omega_{0}-\omega_{1}\eta_{0}^{-2}\partial\eta_{0}\right)^{2}.
  \end{multline}
  See \cite{Dain:2014iba} for the proof.  For completeness, we have written the
  explicit expressions (\ref{eq:pkerr})--(\ref{eq:dkerr}) for the functions $p$
  and $d$ which involve the shift vector. However, we not make use of them in
  this article. The important point is that the functions $p$ and $d$ appear in
  the energy density (\ref{eq:2epsilon}).

The higher  order mass is given by
\begin{equation}
 \label{eq:3mass2}
 \mt=\dfrac{1}{16}\int_{\Rdm}\bar{\varepsilon}_{2}\, d\rho dz,
\end{equation}
with  energy density $\bar{\varepsilon}_2$ is given by 
\begin{multline}
    \label{eq:2epsilon2} 
\frac{\bar{\varepsilon}_2}{\rho}=2e^{2(\sigma_{0}+q_{0})}\dot{p}^{2}+ 2\dfrac{e^{2(\sigma_{0}+q_{0})}}{\eta_{0}^{2}}\dot{\po}^{2} 
 +4e^{-2u_{0}} \dot{\chi}^{AB}_{1}\dot{\chi}_{1AB} + \\
+\left(\partial\dot{\sigma}_{1}+\dot{\omega}_{1}\eta^{-2}_{0}\partial\omega_{0}\right)^{2} +\left(\partial\left(\dot{\omega}_{1}\eta_{0}^{-1}\right)-\eta_{0}^{-1}\dot{\sigma}_{1}\partial\omega_{0}\right)^{2}
 +\left(\eta_{0}^{-1}\dot{\sigma}_{1}\partial\omega_{0}-\dot{\omega}_{1}\eta_{0}^{-2}\partial\eta_{0}\right)^{2}. 
\end{multline}

\section{Extreme Kerr black hole}
\label{s:falloff}

The extreme Kerr black hole solution depends only on one parameter $m_0$, which
represents the total mass of the black hole. In the maximal-isothermal gauge,
the relevant functions used in this article associated with this solution are:
the square norm and the twist of the axial Killing vector denoted by $\eta_0$
and $\omega_0$ respectively and the function $q_0$ which determines the
intrinsic metric of the $t=constant$ slices of the foliation. The function
$\sigma_0$ is calculated from $\eta_0$ by equation (\ref{eq:app35}).  For the
explicit expression for these functions and further details see Appendix A in
\cite{Dain:2014iba}. In this article we will only use the following properties
of these functions.

They satisfy the stationary equations
\begin{align}
\Ldt
\sigma_{0} &=-\dfrac{\arrowvert\partial\omega_{0}\arrowvert^{2}}{\eta_{0}^{2}}, 
\label{eq:sta-sigma}\\
 \partial^{A}\left(\dfrac{\rho\partial_{A}\omega_{0}}{\eta_{0}^{2}}\right) &
 =0. \label{eq:sta-omega}  
\end{align}
Note that equation (\ref{eq:sta-sigma}) is equivalent to 
\begin{equation}
 \label{eq:7eta}
 \Ldt (\ln\eta_{0})=-\dfrac{|\partial\omega_{0}|^{2}}{\eta_{0}^{2}},
\end{equation}
where we have used equation (\ref{eq:app35})  and
\begin{equation}
  \label{eq:21}
  \Ldt(\ln \rho)=0. 
\end{equation}

They satisfies the following elementary inequalities in $\Rdm$
\begin{align}
  \label{eq:falloffso}
  \frac{|\partial \omega_0|^{2}}{\eta^2_0} &\leq \frac{C}{r^{2}}, \\
\label{eq:falloffso2}
  |\partial \sigma_0|^2 &\leq \frac{C}{r^{2}},\\
\label{eq:combi}
  e^{-2 (\sigma_0+q_0)} \frac{|\partial \omega_0|^{2}}{\eta^2_0}  &\leq C, \\
\label{eq:combi2}
e^{-2(\sigma_{0}+q_{0})}|\partial\sigma_{0}|^{2}&\leq C,
\end{align}
where the positive constant $C$ depends only on $m_0$.


\begin{thebibliography}{1}

\bibitem{Dain05d}
S.~Dain.
\newblock Proof of the (local) angular momemtum-mass inequality for
  axisymmetric black holes.
\newblock {\em Class. Quantum. Grav.}, 23:6845--6855, 2006, gr-qc/0511087.

\bibitem{Dain06c}
S.~Dain.
\newblock Proof of the angular momentum-mass inequality for axisymmetric black
  holes.
\newblock {\em J. Differential Geometry}, 79(1):33--67, 2008, gr-qc/0606105.

\bibitem{dain12}
S.~Dain.
\newblock Geometric inequalities for axially symmetric black holes.
\newblock {\em Classical and Quantum Gravity}, 29(7):073001, 2012, 1111.3615.

\bibitem{Dain:2014iba}
S.~Dain and I.~G. de~Austria.
\newblock {On the linear stability of the extreme Kerr black hole under axially
  symmetric perturbations}.
\newblock {\em Class.Quant.Grav.}, 31(19):195009, 2014, 1402.2848.

\bibitem{Dain:2012qw}
S.~Dain and G.~Dotti.
\newblock {The wave equation on the extreme Reissner-Nordstr\'om black hole}.
\newblock {\em Class.Quant.Grav.}, 30:055011, 2013, 1209.0213.

\bibitem{ionescu14}
A.~D. {Ionescu} and S.~{Klainerman}.
\newblock {On the global stability of the wave-map equation in Kerr spaces with
  small angular momentum}.
\newblock {\em ArXiv e-prints}, Dec. 2014, 1412.5679.

\bibitem{Schoen:2012nh}
R.~Schoen and X.~Zhou.
\newblock Convexity of reduced energy and mass angular momentum inequalities.
\newblock {\em Annales Henri Poincaré}, 14(7):1747--1773, 2013.

\end{thebibliography}

\end{document}